\documentclass[final,12pt,3p,onecolumn]{elsarticle}
\makeatletter
\def\ps@pprintTitle{%
	\let\@oddhead\@empty
	\let\@evenhead\@empty
	\def\@oddfoot{\centerline{\thepage}}%
	\let\@evenfoot\@oddfoot}
\makeatother
\usepackage[colorlinks = true,
linkcolor = blue,
urlcolor  = blue,
citecolor = blue,
anchorcolor = blue]{hyperref}
%
\usepackage{amsmath}
\usepackage{amssymb}
\usepackage{xcolor}
\newtheorem{theorem}{Theorem}
\newtheorem{lemma}{Lemma}
\newtheorem{remark}{Remark}

\newtheorem{proof}{Proof}

\usepackage{dcolumn}
\usepackage{subfigure}
\usepackage{graphicx}

\usepackage{epstopdf}
\usepackage{epsfig}


\begin{document}
	
	\begin{frontmatter}
		
		\title{Reduction of damped, driven Klein-Gordon equations into a discrete nonlinear Schr\"odinger equation: justification and numerical comparisons}

		\author[label1,label5]{Y.\ Muda}
		\address[label1]{Department of Mathematical Sciences,University of Essex, Colchester, CO4 3SQ, United Kingdom}
		\address[label5]{Department of Mathematics, Faculty of Science and Technology,\\ 
			State Islamic University of Sultan Syarif Kasim Riau, Pekanbaru, 28294,  Indonesia\fnref{label4}}
		\address[label2]{Theoretical Physics Laboratory,Theoretical High Energy Physics and Instrumentation Research Group, Faculty of Mathematics and Natural Sciences,  Institut Teknologi Bandung,  Bandung, Indonesia, 40132\fnref{label4}}
		\address[label6]{Industrial and Financial Mathematics Research Group, Faculty of Mathematics and Natural Sciences, Institut Teknologi Bandung, Bandung, 40132, Indonesia.}
		
		\cortext[cor1]{Corresponding author}
		
		\ead{ymuda@essex.ac.uk}
		
		\author[label2]{F.T.\ Akbar}
		\ead{ftakbar@fi.itb.ac.id}
		
		\author[label1,label6]{R.\ Kusdiantara}
		\ead{rudy@math.itb.ac.id}
		
		\author[label2]{B.E.\ Gunara}
		\ead{bobby@fi.itb.ac.id}
		
		\author[label1]{H.\ Susanto\corref{cor1}}
		\ead{hsusanto@essex.ac.uk}
		
		\begin{abstract}
			We consider a discrete nonlinear Klein-Gordon equations with damping and external drive. Using a small amplitude ansatz, one usually approximates the equation using a damped, driven discrete nonlinear Schr\"odinger equation. Here, we show for the first time the justification of this approximation by finding the error bound using energy estimate. Additionally, we prove the local and global existence of the Schr\"odinger equation. Numerical simulations are performed that describe the analytical results. Comparisons between discrete breathers of the Klein-Gordon equation and discrete solitons of the discrete nonlinear Schr\"odinger equation are presented.
		\end{abstract}
		
		\begin{keyword}
			{discrete nonlinear Schr\"odinger equation},
			{discrete Klein-Gordon equation},
			{small-amplitude approximation},
			{discrete breather},
			{discrete soliton}
		\end{keyword}
		
	\end{frontmatter}
	
	
	\section{Introduction}
	\label{sec1}
	
	Nonlinear lattices are a set of nonlinear evolution equations that are coupled spatially. Prominent classes of nonlinear lattices are discrete Klein-Gordon \cite{1} and Frenkel-Kontorova  equations \cite{6} that serve as possibly the simplest models for many complex physical and biological systems. While Frenkel-Kontorova type systems correspond to coupled equations with harmonic on-site potential, the discrete Klein-Gordon equations do with the anharmonic one. 
	
	Small-amplitude wave packets of nonlinear lattices are usually explored via reduction to an amplitude or modulation equation in the form of either continuous or discrete nonlinear Schr\"odinger equations. The method is usually referred to as the rotating wave approximation. If one is interested in solution profiles with a much larger length scale than the typical distance between the lattices, they can aim for a continuous approximation and obtain nonlinear Schr\"odinger equations (see, e.g., \cite{14,13,4}). When one is instead interested in waves with the same scale of the typical lattice distance, one will obtain an approximation in the form of a discrete nonlinear Schr\"odinger equation (see, e.g., \cite{19,11,7,15,16,12,23}) with the corresponding wave properties quite distinct from those in the continuous limit. 
	
	Despite their widespread use, rigorous justifications of the rotating wave procedures are more sparse, with an early example being \cite{14}, wherein Hamiltonian Klein-Gordon lattices are approximated by nonlinear Schr\"odinger equations (see also \cite{9,17}). A justification for the discrete nonlinear Schr\"odinger approximation was provided rather recently in \cite{22}.
	
	In this paper, we consider a Klein-Gordon equation with external damping and drive. Our present work will be relevant to models appearing in the study of, e.g., superconducting Josephson junctions \cite{3}, mechanical oscillators \cite{18}, electrical lattices \cite{10}, etc. Using the rotating-wave approximation, we will show that the corresponding modulation equation is a damped, driven discrete nonlinear Schr\"odinger equation. The similar approximation has been used to reduce an externally driven sine-Gordon equation into a damped, driven continuous nonlinear Schr\"odinger equation \cite{24,5}. In here, we are going to provide a rigorous justification of the discrete Schr\"odinger equation. Note that our work here is significantly different from the aforementioned published works in the sense that our original governing equation as well as the modulation one are not Hamiltonian. Moreover, the external drive yields a constant term that can be challenging to control in providing boundedness of the error, i.e., the solution does not lie in $\ell^2$-space of $\mathbb{Z}$. Without external damping and drive, the initial value problem for the discrete nonlinear Schr\"odinger equation with power nonlinearity in weighted $\ell^2$-space has been shown to be globally well-posed in \cite{21}. N'Gu\'er\'ekata and Pankov \cite{20} provides a stronger result of global well-posedness in spaces of exponentially decaying data. Here, by considering the problem with damping and drive in a periodic domain, we are able to provide the global existence of the discrete nolinear Schr\"dinger equation as well as the error bound of the rotating wave approximation. 
	
	This paper is organized as follows. We define the governing equation and formulate preliminary results on the unique global solution and error estimate in Section \ref{sec2}. The main result on the error bound of the rotating-wave approximation as time evolves is presented in Section \ref{sec3}. In Section \ref{sec4} we describe the computation of the error made by the Schr\"odinger approximation numerically. The comparison is provided for localised waves, i.e., breather solutions. 
	
	\section{Mathematical formulation and preliminary results}
	\label{sec2}
	
	Consider the following model of coupled oscillators with damping and drive on a finite lattice 
	\begin{equation}\label{dKGdriven}
	\ddot{u}_j = -u_j - \xi u_j^3 + \epsilon(\Delta_2 u_j) - \alpha \dot{u}_j + \frac{h}{2}(e^{i\Omega t} + e^{-i\Omega t}),\quad j\in\mathbb{Z}_{N} = \{1,\dots,N\},
	\end{equation}
	{where $u_j \equiv u_j(t)$ is a real-valued wave function at site $j$, the overdot is the time derivative and $\epsilon$ represents the coupling constant between two adjacent sites, with $\Delta_2 u_j = u_{j+1} - 2u_j + u_{j-1}$ being the discrete Laplacian in one dimension. The parameters $\alpha$ and $h$ denote the damping coefficient and the strength of the external drive, respectively. The driving frequency is taken to be $\Omega = 1-\frac{\epsilon \omega}{2}$, i.e., it is close to the natural frequency of the uncoupled linear oscillator. We also consider a periodic boundary condition,
		\begin{equation}
		u_{j+N}(t) = u_{j}(t), \qquad \text{for all} j \in \mathbb{Z}_N.
		\end{equation}
		
		Considering small-amplitude oscillations, one commonly uses the rotating wave approximation
		\begin{align}\label{ansatzerror}
		u_j(t)\approx X_j(t) =& \sqrt{\epsilon } A_j\left(\tau\right) e^{i \Omega t} + \frac{1}{8} \xi \epsilon ^{3/2} 
		A_j^3\left(\tau\right) e^{3 i \Omega t} +c.c.,
		\end{align}
		i.e., $X_j(t)$ is the leading order approximation of $u_j(t)$ and $\tau = \frac{\epsilon  t}{2}$ is the slow time variable. Substituting the ansatz (\ref{ansatzerror}) into Eq.\ (\ref{dKGdriven}) and removing the resonant terms $e^{\pm i \Omega t}$ at the leading order of $\mathcal{O}(\epsilon^{3/2})$, we obtain the damped, driven discrete nonlinear Schr\"{o}dinger equation
		\begin{equation}\label{dNLSerror}
		i \dot{A}_j + 3 \xi |A_j|^2 A_j - \Delta_2 A_j + i \hat{\alpha} A_j - \hat{h} + \omega A_j = 0,
		\end{equation}
		where $\alpha = \epsilon \hat{\alpha}$, $h = 2 \epsilon^{3/2} \hat{h}$, and $A_{j+N}(t) = A_{j}(t),$ i.e., the periodic boundary condition.
		
		Using Eqs.\ (\ref{ansatzerror}) and (\ref{dNLSerror}) to approximate the solutions of (\ref{dKGdriven}) will yield the residual terms
		\begin{eqnarray} \label{res2}
		\mathrm{Res}_j(t) & := & \epsilon^{5/2} \left[ \frac{e^{i\Omega t}}{2} \left( \frac{3}{4} \xi^2 A_j^3 \bar{A}_j^2 - i \hat{\alpha } \omega  A_j + \hat{\alpha } \dot{A}_j - \frac{\omega ^2 A_j}{2} -  i \omega  \dot{A}_j + \frac{\ddot{A}_j}{2} \right) + \frac{e^{3i\Omega t}}{8} \left( 6 \xi^2 A_j^4 \bar{A}_j \right. \right. \nonumber\\
		& & \left. \qquad \left. + 3 i \hat{\alpha } \xi  A_j^3 + 9 \xi  \omega  A_j^3 + 9 i \xi  A_j^2 \dot{A}_j + 2\xi  A_j^3 - \xi  A_{j-1}^3 -\xi  A_{j+1}^3 \right) + \frac{e^{5i\Omega t}}{8} \left( 3\xi^2 A_j^5 \right) \right] \nonumber \\
		& & +\epsilon^{7/2} \left[ \frac{e^{i\Omega t}}{32} \left( 3 \xi^3 A_j^3 \bar{A}_j^4 \right) + \frac{e^{3i\Omega t}}{32} \left( -6 i \hat{\alpha } \xi  \omega  A_j^3 + 6 \hat{\alpha } \xi  A_j^2 \dot{A}_j - 9 \xi  \omega ^2 A_j^3 - 18 i \xi  \omega  A_j^2 \dot{A}_j \right. \right. \nonumber\\
		& & \left.\qquad \left.+ 6 \xi  A_j \left(\dot{A}_j\right)^2 + 3 \xi  A_j^2 \ddot{A}_j \right) + \frac{e^{5i\Omega t}}{64} \left( 3 \xi^3 A_j^6 \bar{A}_j \right) + \frac{e^{7i\Omega t}}{64} \left( 3 \xi^3 A_j^7 \right) \right] \nonumber \\
		& & + \epsilon^{9/2}\left[ \frac{e^{3i\Omega t}}{512} \left( 3 \xi^4 A_j^6 \bar{A}_j^3 \right) + \frac{e^{9i\Omega t}}{512} \left(\xi^4 A_j^9 \right) \right] + \mathrm{c.c.}
		\end{eqnarray}
		{The terms with derivatives of $A_j$ 
			can be changed into those without derivative using Eq.\ (\ref{dNLSerror}), provided that $\left({A_j}\right)_{j\in \mathbb{Z}_N}$ is a twice differentiable sequence with respect to time. Because of the periodic boundary condition, we consider the sequence space $\ell^2(\mathbb{Z}_{N})$ and we will simply denote $\left({A_j}\right)_{j\in \mathbb{Z}_N} \in \ell^2(\mathbb{Z}_N)$ by $A$. The space $\ell^2(\mathbb{Z}_N)$ is a Hilbert space equipped with norm,
			\begin{equation}
			\|A\|_{\ell^2(\mathbb{Z}_N)} = \sum_{j=1}^{N} |A_j|^2\:.
			\end{equation}}
		
		The following lemma gives us a preliminary result on the global solutions of the discrete nonlinear Schr\"{o}dinger equation (\ref{dNLSerror}).
		
		\begin{lemma}\label{Global}
			
			Let $\phi \in \ell^2(\mathbb{Z}_N)$ be an initial data. Then there exists a unique global solution $A(\tau)$ of the discrete nonlinear Schr\"{o}dinger equation (\ref{dNLSerror}) in $\ell^2(\mathbb{Z}_N)$ such that $A(0) = \phi$. Moreover, the solution $A(\tau)$ is smooth in $\tau$ and there is a real constant $C_A$, that depends on the initial data, $\hat{h}$, $\hat{\alpha}$ and $N$ such that $\lVert A(\tau)\rVert_{\ell^2(\mathbb{Z}_N)} \leq C_A$.
			
		\end{lemma}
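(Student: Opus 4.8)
The plan is to treat the discrete nonlinear Schr\"odinger system (\ref{dNLSerror}) as a finite-dimensional system of ordinary differential equations and to combine standard Picard--Lindel\"of theory for local existence with an a priori $\ell^2$-energy estimate that rules out finite-time blow-up.

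First I would rewrite (\ref{dNLSerror}) in the explicit form
\begin{equation}
\dot{A}_j = i\left( 3\xi |A_j|^2 A_j - \Delta_2 A_j + \omega A_j - \hat{h} \right) - \hat{\alpha} A_j, \qquad j \in \mathbb{Z}_N,
\end{equation}
which, because of the periodic boundary condition, is an autonomous system on $\mathbb{C}^N \cong \ell^2(\mathbb{Z}_N)$. The right-hand side is a polynomial in the components $A_j$ and $\bar{A}_j$, hence smooth and locally Lipschitz on $\ell^2(\mathbb{Z}_N)$. By the Picard--Lindel\"of theorem there is a unique maximal solution $A(\tau)$ with $A(0) = \phi$ on some interval $[0, \tau_{\max})$, and smoothness of the vector field yields smoothness of $A(\tau)$ in $\tau$ by bootstrapping.

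The core of the argument is the a priori bound. Taking the $\ell^2$ inner product of the equation with $A$---that is, multiplying by $\bar{A}_j$, summing over $j$ and taking real parts---I would use $\mathrm{Re}(i|w|^2) = 0$ to kill the gauge-invariant cubic and linear terms $3\xi|A_j|^2 A_j$ and $\omega A_j$, and discrete summation by parts, $\sum_j \bar{A}_j \Delta_2 A_j = -\sum_j |A_{j+1} - A_j|^2 \in \mathbb{R}$, to remove the Laplacian term, whose contribution is purely imaginary after multiplication by $i$. What survives is
\begin{equation}
\frac{1}{2}\frac{d}{d\tau}\|A\|_{\ell^2(\mathbb{Z}_N)}^2 = -\hat{\alpha}\|A\|_{\ell^2(\mathbb{Z}_N)}^2 - \hat{h}\sum_{j=1}^{N}\mathrm{Im}(A_j).
\end{equation}
Estimating the forcing term by Cauchy--Schwarz, $\left|\sum_j \mathrm{Im}(A_j)\right| \le \sqrt{N}\,\|A\|_{\ell^2(\mathbb{Z}_N)}$, yields the differential inequality $\frac{d}{d\tau}\|A\| \le -\hat{\alpha}\|A\| + |\hat{h}|\sqrt{N}$. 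With $\hat{\alpha} > 0$, a Gr\"onwall argument then gives the uniform bound
\begin{equation}
\|A(\tau)\|_{\ell^2(\mathbb{Z}_N)} \le \max\left( \|\phi\|_{\ell^2(\mathbb{Z}_N)}, \frac{|\hat{h}|\sqrt{N}}{\hat{\alpha}} \right) =: C_A
\end{equation}
for all $\tau \in [0, \tau_{\max})$. Since the $\ell^2$-norm stays bounded, the standard continuation criterion forces $\tau_{\max} = \infty$, so the solution is global.

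The step I expect to be the main obstacle---and the reason the result is restricted to a finite periodic lattice---is controlling the constant drive term $\hat{h}$. Unlike the cubic and dispersive terms, the forcing does not vanish in the energy identity; it can only be absorbed because on $\mathbb{Z}_N$ the sum $\sum_j \mathrm{Im}(A_j)$ is controlled by $\sqrt{N}\,\|A\|$, a bound that has no analogue on the infinite lattice $\mathbb{Z}$, where a constant forcing fails to lie in $\ell^2$. The positivity of the damping $\hat{\alpha}$ is precisely what allows this forcing to be balanced so that the Gr\"onwall estimate closes into a time-uniform bound rather than mere local-in-time control.
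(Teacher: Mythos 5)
Your proposal is correct and follows essentially the same route as the paper: local existence via a Picard/contraction argument (the paper carries out the fixed-point iteration explicitly in $C([0,T],\ell^2(\mathbb{Z}_N))$, which on the finite periodic lattice is exactly your Picard--Lindel\"of step), followed by the identical $\ell^2$ energy identity in which the damping $\hat{\alpha}>0$ absorbs the constant drive, yielding a uniform bound and hence global existence by continuation. If anything, your estimate $\bigl|\sum_{j}\mathrm{Im}(A_j)\bigr|\le\sqrt{N}\,\lVert A\rVert_{\ell^2(\mathbb{Z}_N)}$ is slightly more careful than the paper's, which omits the factor $\sqrt{N}$ (harmless, since the constant $C_A$ is allowed to depend on $N$).
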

		\begin{proof}
			In proving the global uniqueness in the lemma, we start by considering the existence of local solution. Let us rewrite Eq.\ (\ref{dNLSerror}) in the following equivalent integral form \\
			\begin{equation}\label{Intform}
			A_j(\tau) = A_j(0) - i \int_{0}^{\tau} \left(\Delta_2 A_j - i \hat{\alpha} A_j - 3\xi |A_j|^2 A_j - \omega A_j + \hat{h}\right) ds\:.
			\end{equation}
			Define a closed ball of radius $\delta$, that needs not necessarily be small, in $\ell^2(\mathbb{Z}_N)$, 
			\begin{equation}
			\mathcal{B} = \{A \in C\left([0,\tau_m],\ell^2(\mathbb{Z}_N)\right)\;|\; \lVert A (\tau) \rVert_{\ell^2(\mathbb{Z}_N)}  \leq \delta \}\:,
			\end{equation}
			equipped with the norm 
			\begin{equation*}
			\lVert A \rVert_{\mathcal{B}} = \sup_{\tau \in [0,T]} \lVert A (\tau) \rVert_{\ell^2(\mathbb{Z}_N)}.
			\end{equation*}
			
			For $A \in \ell^2(\mathbb{Z}_N)$, we define a nonlinear operator,
			\begin{equation}\label{operatorK}
			K_j\left[A(\tau)\right] = \phi - i \int_{0}^{\tau} \left(\Delta_2 A_j - i \hat{\alpha} A_j - 3\xi |A_j|^2 A_j - \omega A_j + \hat{h}\right) ds.
			\end{equation}
			We want to prove that $K$ is contraction mapping on $\mathcal{B}$.
			
			Due to the periodic boundary condition, $A_j = A_{N+j}$, we get
			\begin{equation}
			\lVert \Delta_2 A \rVert_{\ell^2(\mathbb{Z}_N)}\leq C_{\Delta_2} \lVert A \rVert_{\ell^2(\mathbb{Z}_N)}.
			\end{equation}
			Therefore, the discrete Laplacian $\Delta_2$ is a bounded operator in $\ell^2(\mathbb{Z}_N)$. From the Banach algebra property of the $\ell^2(\mathbb{Z}_N)$-space, there is a constant $C > 0$ such that for every $A, B \in \ell^2(\mathbb{Z}_N)$ we have
			\begin{equation}\label{Banach}
			\lVert A B\rVert_{\ell^2(\mathbb{Z}_N)} \leq C \lVert A \rVert_{\ell^2(\mathbb{Z}_N)} \lVert B \rVert_{\ell^2(\mathbb{Z}_N)}.
			\end{equation}
			From (\ref{operatorK}) and using the estimate (\ref{Banach}) we obtain the following bound 
			\begin{equation*}
			\lVert K (A)\rVert_{\mathcal{B}} \leq \delta_0 + T(C\delta + \alpha \delta + 3C\delta^3 + \omega \delta).
			\end{equation*}
			We can pick $\delta_0 < \frac{\delta}{2}$ and $T \leq \frac{\delta}{2(C\delta + \alpha \delta + 3C\delta^3 + \omega \delta)}$. Thus, $K$ is a mapping from $\overline{B}_\delta$ to itself.\\
			
			For $A,B \in \overline{B}_\delta$, we have 
			\begin{align*}
			K_j[A(\tau)] - K_j[B(\tau)] = -i \int_{0}^{\tau}& \left[\Delta_2(A_j-B_j) - i\hat{\alpha} (A_j-B_j) - 3\xi(|A_j|^2A_j - |B_j|^2 B_j)\right.\\
			&\left.- \omega (A_j-B_j)\right] ds.
			\end{align*}	
			Therefore, we obtain 
			\begin{align*}
			\lVert K(A) - K(B) \rVert_{\mathcal{B}}\leq T \{C_\Delta + \hat{\alpha} + 3C\hat{\alpha}^2 + \omega\}\lVert A-B \rVert_{\mathcal{B}} .
			\end{align*}
			By taking $T < \mathrm{min}\left(\frac{1}{C_\Delta + \hat{\alpha} + 3C\hat{\alpha}^2 + \omega}, \frac{\delta}{2(C\delta + \alpha \delta + 3C\delta^3 + \omega \delta)}\right)$, then $K$ is a contraction mapping.\\
			{Therefore, there exists a constant $T$ such that the discrete nonlinear Schr\"{o}dinger equation has a local unique solution $A \in \mathcal{C}\left([0,T], \ell^2(\mathbb{Z}_N)\right)$ with $\sup_{\tau \in [0,T]} \lVert A (\tau) \rVert_{\ell^2(\mathbb{Z}_N)} \leq \delta$.} We can construct the maximal solution by repeating the process above and gluing the solution using uniqueness condition.
			
			Now, we will prove the global well-posedness of the discrete nonlinear Schr\"{o}dinger equation (\ref{dNLSerror}). Multiplying the $j$th-component of the equation by $\bar{A}_j$, taking the imaginary part and summing over $j$, we obtain
			\begin{equation}
			\dfrac{d}{d\tau} \lVert A \rVert^2_{\ell^2(\mathbb{Z}_N)} + 2   \hat{\alpha}\lVert A \rVert^2_{\ell^2(\mathbb{Z}_N)} =  2 |\hat{h}| \sum_{j=1}^{N} \mathrm{Im}(\bar{A}_j) \leq 2  |\hat{h}| \lVert A \rVert_{\ell^2(\mathbb{Z}_N)} \;.
			\end{equation}
			{Integrating the inequality, we get
				\begin{eqnarray}
				\lVert A \rVert^2_{\ell^2(\mathbb{Z}_N)} & \leq & \dfrac{|\hat{h}|}{\hat{\alpha}} + \left(\lVert \phi \rVert_{\ell^2(\mathbb{Z}_N)} -\dfrac{|\hat{h}|}{\hat{\alpha}} \right) e^{-\hat{\alpha}\tau} \nonumber\\
				& \leq & C_A\left(\lVert \phi\rVert_{\ell^2(\mathbb{Z}_N)}, \hat{h}, \hat{\alpha},N\right)\:,
				\end{eqnarray}
				which provides a global bound to the solutions and hence, conclude the proof of the lemma.}
		\end{proof}
		
		The following lemma will give us an estimate for the leading order approximation (\ref{ansatzerror}).
		\begin{lemma}\label{ConstantX}
			{For every $A_0 \in \ell^2(\mathbb{Z}_N)$, there exits a positive constant $C_X(\lVert A_0\rVert_{\ell^2(\mathbb{Z}_N)}, \hat{h}, \hat{\alpha},N)$ such that the leading-order approximation (\ref{ansatzerror}) is estimated by 
				\begin{equation}
				\lVert X(t)\rVert_{\ell^2(\mathbb{Z}_N)} + 	\lVert \dot{X}(t) \rVert_{\ell^2(\mathbb{Z}_N)} \leq \sqrt{\epsilon}\; C_X (\lVert A_0 \rVert_{\ell^2(\mathbb{Z}_N)}, \hat{h}, \hat{\alpha},N)\;,
				\end{equation}}
			for all $t\in[0,\infty)$ and $\epsilon \in (0,1)$.
		\end{lemma}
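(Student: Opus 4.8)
The plan is to substitute the ansatz (\ref{ansatzerror}) directly, estimate the two terms $\lVert X\rVert_{\ell^2(\mathbb{Z}_N)}$ and $\lVert \dot X\rVert_{\ell^2(\mathbb{Z}_N)}$ separately, and in each case reduce everything to the uniform-in-time bound $\lVert A(\tau)\rVert_{\ell^2(\mathbb{Z}_N)} \leq C_A$ furnished by Lemma \ref{Global}. The structural reason the estimate can hold for all $t\in[0,\infty)$, and not merely on a finite interval, is that each oscillatory factor $e^{ik\Omega t}$ has unit modulus and therefore drops out of the $\ell^2(\mathbb{Z}_N)$-norm; moreover $\tau = \epsilon t/2 \geq 0$ whenever $t\geq 0$, so the global bound $C_A$ applies throughout. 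Since $X_j$ is real, I would write $X_j = 2\,\mathrm{Re}(W_j)$ with $W_j = \sqrt{\epsilon}\,A_j e^{i\Omega t} + \tfrac18\xi\epsilon^{3/2}A_j^3 e^{3i\Omega t}$, giving the pointwise bound $|X_j| \leq 2\sqrt{\epsilon}\,|A_j| + \tfrac14|\xi|\epsilon^{3/2}|A_j|^3$.

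First I would estimate $\lVert X\rVert_{\ell^2(\mathbb{Z}_N)}$. By the triangle inequality and the Banach algebra property (\ref{Banach}) applied twice (or simply the elementary inequality $\sum_j |A_j|^6 \leq (\sum_j|A_j|^2)^3$), one has $\lVert A^3\rVert_{\ell^2(\mathbb{Z}_N)} \leq C^2 \lVert A\rVert_{\ell^2(\mathbb{Z}_N)}^3$. Combining this with Lemma \ref{Global} yields $\lVert X(t)\rVert_{\ell^2(\mathbb{Z}_N)} \leq 2\sqrt{\epsilon}\,C_A + \tfrac14|\xi|C^2\epsilon^{3/2}C_A^3$, and using $\epsilon\in(0,1)$ to absorb the extra power of $\epsilon$ bounds this by $\sqrt{\epsilon}$ times a constant of the required form.

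The estimate for $\dot X$ is the more delicate part and I expect it to be the main obstacle. Differentiating $W_j$ in $t$ and applying the chain rule $\frac{d}{dt} = \frac{\epsilon}{2}\frac{d}{d\tau}$, the $\tau$-derivatives of $A_j$ pick up an extra factor $\epsilon/2$, so that the slowest-decaying contribution to $\dot X_j$ is $\sqrt{\epsilon}\,i\Omega A_j e^{i\Omega t} = \mathcal{O}(\sqrt{\epsilon})$, while every term carrying $\tfrac{dA_j}{d\tau}$ is $\mathcal{O}(\epsilon^{3/2})$ or smaller. Since $\Omega = 1 - \epsilon\omega/2$ satisfies $|\Omega| \leq 1 + |\omega|/2$ uniformly in $\epsilon\in(0,1)$, the dominant term is controlled by $\sqrt{\epsilon}\,(1+|\omega|/2)\,C_A$ via Lemma \ref{Global}. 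To handle the subdominant derivative terms I would replace $\tfrac{dA_j}{d\tau}$ by the algebraic right-hand side of the DNLS equation (\ref{dNLSerror}), namely $\tfrac{dA_j}{d\tau} = i\big(3\xi|A_j|^2 A_j - \Delta_2 A_j + i\hat{\alpha}A_j - \hat{h} + \omega A_j\big)$, and then bound its norm term by term: the coupling term through $\lVert\Delta_2 A\rVert_{\ell^2(\mathbb{Z}_N)} \leq C_{\Delta_2}\lVert A\rVert_{\ell^2(\mathbb{Z}_N)}$, the cubic term through (\ref{Banach}), the linear terms through $C_A$, and---crucially---the constant drive term through $\lVert\hat{h}\rVert_{\ell^2(\mathbb{Z}_N)} = |\hat{h}|\sqrt{N}$. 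This last contribution is exactly the non-decaying term flagged in the introduction, and is the source of the explicit $N$-dependence of $C_X$ (beyond the $N$-dependence already inherited through $C_A$).

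Finally I would collect the two estimates, factor out a common $\sqrt{\epsilon}$, and invoke $\epsilon\in(0,1)$ once more to absorb all the higher powers of $\epsilon$ into a single constant $C_X(\lVert A_0\rVert_{\ell^2(\mathbb{Z}_N)}, \hat{h}, \hat{\alpha}, N)$. Because none of the intermediate bounds depends on $t$, the resulting constant is $t$-independent, yielding $\lVert X(t)\rVert_{\ell^2(\mathbb{Z}_N)} + \lVert\dot X(t)\rVert_{\ell^2(\mathbb{Z}_N)} \leq \sqrt{\epsilon}\,C_X$ for all $t\in[0,\infty)$, as claimed.
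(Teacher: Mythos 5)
Your proposal is correct and follows essentially the same route as the paper's proof: bound $\lVert X\rVert_{\ell^2(\mathbb{Z}_N)}$ by the triangle inequality, the Banach algebra property, and the global bound $C_A$ from Lemma \ref{Global}, then bound $\lVert \dot X\rVert_{\ell^2(\mathbb{Z}_N)}$ by differentiating the ansatz and substituting the right-hand side of (\ref{dNLSerror}) for $\dot A$, estimating each term (including the constant drive, whose $\ell^2(\mathbb{Z}_N)$-norm carries the $\sqrt{N}$) before absorbing higher powers of $\epsilon$ using $\epsilon\in(0,1)$. Your treatment is, if anything, slightly more explicit than the paper's on the $|\Omega|\leq 1+|\omega|/2$ bound and the $|\hat h|\sqrt{N}$ contribution, which the paper compresses into unnamed constants.
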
	
		
		\begin{proof}
			{From the global existence in Lemma \ref{Global} and using the Banach algebra property of $\ell^2(\mathbb{Z}_N)$, we obtain 
				\begin{align}\label{X(t)}
				\begin{aligned}
				\left\lVert X(t)\right\rVert_{\ell^2(\mathbb{Z}_N)} & = \left\lVert\sqrt{\epsilon } \left(A e^{i \Omega t}  + \bar{A} e^{-i \Omega t}\right) + \frac{1}{8} \xi \epsilon ^{3/2} \left(A^3 e^{3 i \Omega t} + \bar{A}^3 e^{-3 i \Omega t}\right)\right\rVert_{\ell^2(\mathbb{Z}_N)}\\
				& \leq \left\lVert\sqrt{\epsilon } \left(A e^{i \Omega t}  + \bar{A} e^{-i \Omega t}\right) \right\rVert_{\ell^2(\mathbb{Z}_N)} + \left\lVert \frac{1}{8} \xi \epsilon ^{3/2} \left(A^3 e^{3 i \Omega t} + \bar{A}^3 e^{-3 i \Omega t}\right)\right\rVert_{\ell^2(\mathbb{Z}_N)}\\
				& \leq 2 \sqrt{\epsilon } \left\lVert A\right\rVert_{\ell^2(\mathbb{Z}_N)} + \frac{1}{4} \xi \epsilon ^{3/2} \left\lVert A^3\right\rVert_{\ell^2(\mathbb{Z}_N)} \\
				& \leq \sqrt{\epsilon}\;C_{X_1} (\lVert A_0 \rVert_{\ell^2(\mathbb{Z}_N)}, \hat{h}, \hat{\alpha},N)
				\end{aligned}
				\end{align}
				and 
				\begin{align}\label{Xdot}
				\begin{aligned}
				\left\lVert\dot{X}(t)\right\rVert_{\ell^2(\mathbb{Z}_N)} = & 	\left\lVert\frac{1}{8} \xi \epsilon ^{3/2} \left(\frac{3}{2} \epsilon  \bar{A}^2 e^{-3 i \Omega t
				} \dot{\bar{A}} -3 i \Omega \bar{A}^3 e^{-3 i \Omega t } + \frac{3}{2}
				\epsilon  A^2 \dot{A} e^{3 i \Omega t } +3 i \Omega A^3 e^{3 i \Omega t }\right) \right.\\
				& \left. \quad+\sqrt{\epsilon } \left(\frac{1}{2} \epsilon  e^{-i\Omega t
				} \dot{\bar{A}} - i \Omega \bar{A} e^{-i\Omega t } +\frac{1}{2} \epsilon \dot{A} e^{i \Omega t }+i \Omega A  e^{i\Omega t }\right) \right\rVert_{\ell^2(\mathbb{Z}_N)} .
				\end{aligned} 
				\end{align}
				From (\ref{dNLSerror}),
				we have that
				\begin{align}\label{Adot2}
				\begin{aligned}
				\left\lVert \dot{A}(\tau) \right\rVert_{\ell^2(\mathbb{Z}_N)} & = \left\lVert 3 i \xi A^2 \bar{A} -\hat{\alpha}  A + i \omega  A - i(\Delta_2 A) - i \hat{h} \right\rVert_{\ell^2(\mathbb{Z}_N)}\\
				&\leq 3 \xi C_A^3\left(\lVert A_0\rVert_{\ell^2(\mathbb{Z}_N)}, \hat{h}, \hat{\alpha}\right) + \alpha C_A\left(\lVert A_0\rVert_{\ell^2(\mathbb{Z}_N)}, \hat{h}, \hat{\alpha}\right) + \omega C_A\left(\lVert A_0\rVert_{\ell^2(\mathbb{Z}_N)}, \hat{h}, \hat{\alpha}\right) \\
				&\quad+ C_{\Delta_2} C_A\left(\lVert A_0\rVert_{\ell^2(\mathbb{Z}_N)}, \hat{h}, \hat{\alpha}\right) + C \hat{h} \\
				& \leq \tilde{C}_A\left(\lVert A_0\rVert_{\ell^2(\mathbb{Z}_N)}, \hat{h}, \hat{\alpha},N\right).
				\end{aligned}
				\end{align}
				Therefore, Eq. (\ref{Xdot}) becomes
				\begin{align}\label{Xdot2}
				\left\lVert\dot{X}(t)\right\rVert_{\ell^2(\mathbb{Z}_N)} \leq \sqrt{\epsilon}\;C_{X_2} (\lVert A_0 \rVert_{\ell^2(\mathbb{Z}_N),}, \hat{h}, \hat{\alpha},N)
				\end{align}
				and
				\begin{align*}
				\lVert X(t)\rVert_{\ell^2(\mathbb{Z}_N)} + 	\lVert \dot{X}(t) \rVert_{\ell^2(\mathbb{Z}_N)}   \leq \sqrt{\epsilon}\;C_X (\lVert A_0 \rVert_{\ell^2(\mathbb{Z}_N)}, \hat{h}, \hat{\alpha},N)
				\end{align*} }
		\end{proof}
		
		Next, we have the following result on the bound of the residual terms Eq. (\ref{res2}).
		\begin{lemma}\label{ConstantR}
			{For every $A_0 \in \ell^2(\mathbb{Z}_N)$, there exists a positive $\epsilon-$independent constant\\ $C_R(\lVert A_0\rVert_{\ell^2(\mathbb{Z}_N)}, \hat{h}, \hat{\alpha},N)$, such that for every $\epsilon \in (0,1)$ and every $ t \in \mathbb{R}$, the residual term in (\ref{res2}) is estimated by
				\begin{equation}\label{Residu}
				\lVert \mathrm{Res}(t) \rVert_{\ell^2(\mathbb{Z}_N)} \leq C_R \left(\lVert A_0 \rVert_{\ell^2(\mathbb{Z}_N)}, \hat{h}, \hat{\alpha},N\right)\epsilon^{5/2}.
				\end{equation}}
		\end{lemma}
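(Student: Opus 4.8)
The plan is to bound each of the three groups of terms in (\ref{res2})---those carrying the prefactors $\epsilon^{5/2}$, $\epsilon^{7/2}$, and $\epsilon^{9/2}$---separately and then combine them. The first observation I would make is that every exponential factor $e^{ik\Omega t}$ has modulus one and therefore does not contribute to the $\ell^2(\mathbb{Z}_N)$ norm; likewise, since $\epsilon \in (0,1)$ we have $\epsilon^{7/2} \le \epsilon^{5/2}$ and $\epsilon^{9/2} \le \epsilon^{5/2}$, so that after applying the triangle inequality it suffices to bound the bracketed coefficient of each monomial by an $\epsilon$- and $t$-independent constant and then factor out $\epsilon^{5/2}$.

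The central task is to control the terms containing the temporal derivatives $\dot{A}_j$ and $\ddot{A}_j$. Following the remark after (\ref{res2}), I would eliminate these derivatives using (\ref{dNLSerror}): solving for $\dot{A}_j$ gives exactly the expression bounded in (\ref{Adot2}), so that $\lVert \dot{A}\rVert_{\ell^2(\mathbb{Z}_N)} \le \tilde{C}_A$. Differentiating (\ref{dNLSerror}) once more---legitimate because Lemma \ref{Global} guarantees that $A(\tau)$ is smooth in $\tau$---yields $\ddot{A}_j$ as a combination of $A_j$, $\bar{A}_j$, $\dot{A}_j$, $\dot{\bar{A}}_j$ and their nearest-neighbour shifts. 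Invoking the global bound $\lVert A\rVert_{\ell^2(\mathbb{Z}_N)} \le C_A$ from Lemma \ref{Global}, the bound on $\dot{A}$ just obtained, and the Banach-algebra estimate (\ref{Banach}) to handle the products, I would conclude that $\lVert \ddot{A}\rVert_{\ell^2(\mathbb{Z}_N)}$ is likewise bounded by a constant depending only on $\lVert A_0\rVert_{\ell^2(\mathbb{Z}_N)}$, $\hat{h}$, $\hat{\alpha}$, and $N$.

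With these derivative bounds in hand, every remaining factor in (\ref{res2}) is a finite product of $A_j$, $\bar{A}_j$, their shifts, $\dot{A}_j$, and $\ddot{A}_j$, multiplied by the fixed parameters $\xi$, $\hat{\alpha}$, $\omega$. Repeated use of (\ref{Banach}) bounds each such product by an appropriate power of $C_A$ together with the derivative constants, giving a uniform bound $M(\lVert A_0\rVert_{\ell^2(\mathbb{Z}_N)}, \hat{h}, \hat{\alpha}, N)$ on the sum of all bracketed coefficients. Collecting the three $\epsilon$-groups then gives $\lVert \mathrm{Res}(t)\rVert_{\ell^2(\mathbb{Z}_N)} \le (\epsilon^{5/2} + \epsilon^{7/2} + \epsilon^{9/2})\, M \le 3\,\epsilon^{5/2} M$, and setting $C_R := 3M$ completes the argument.

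I expect the main obstacle to be the second step, namely establishing a clean, $t$-uniform bound on $\ddot{A}$: one must differentiate the nonlinearity $|A_j|^2 A_j$ and verify that the resulting $\dot{A}$-dependent terms are controlled by (\ref{Adot2}) uniformly in time, which in turn relies on the time-uniformity of the global bound $C_A$ furnished by Lemma \ref{Global}. Once that is secured, the remaining estimates are routine applications of the Banach-algebra property and the triangle inequality.
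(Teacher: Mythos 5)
Your proposal is correct and follows essentially the same route as the paper, whose own proof is only a one-sentence sketch invoking Lemma \ref{Global} (global existence, smoothness, and the uniform bound $C_A$) together with the Banach algebra property of $\ell^2(\mathbb{Z}_N)$ --- precisely the ingredients you use, including the elimination of $\dot{A}_j$ and $\ddot{A}_j$ via Eq.\ (\ref{dNLSerror}), which the paper flags in the remark following (\ref{res2}). In fact, your write-up supplies the details (the bound on $\ddot{A}$, the factoring out of $\epsilon^{5/2}$ using $\epsilon \in (0,1)$) that the paper leaves implicit.
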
	
		
		\begin{proof}
			To prove this lemma, we can use the result from Lemma \ref{Global} as well as the property of Banach algebra in $\ell^2(\mathbb{Z}_N)$, such that from the global existence and smoothness of the solution $A(\tau)$ of the discrete nonlinear Schr\"odinger equation (\ref{dNLSerror}) in Lemma \ref{Global}, we obtain (\ref{Residu}).
		\end{proof}
		
		\section{Main Results}
		\label{sec3}
		
		In this section we will develop the main result on the time evolution of the rotating-wave approximation error by writing $u_j(t)= X_j(t) + y_j(t)$, where $X_j(t)$ is the leading-order approximation (\ref{ansatzerror}) and $y_j(t)$ is the error term. Plugging the decomposition into Eq.\ (\ref{dKGdriven}), we obtain the evolution problem for the error term:
		\begin{align}\label{erroreq}
		\ddot{y}_j + y_j + \xi \left(y_j^3 + 3 X_j^2 y_j + 3 X_j y_j^2\right) - \epsilon \Delta_2 y_j + \epsilon \hat{\alpha} \dot{y}_j + \mathrm{Res}_j(t) = 0, \quad  j\in\mathbb{Z}_N,
		\end{align}
		where the residual term $\mathrm{Res}_j(t)$ is given by (\ref{res2}) if $A(\tau)$ satisfies Eq.\ (\ref{dNLSerror}). {Since $u$ and $X$ satisfy periodic boundary conditions, the error term $y$ also satisfies the same condition $y_{j+N} (t) = y_{j} (t)$.
			
			Associated with Eq.\ (\ref{erroreq}), we can define the energy of the error term as
			\begin{equation}\label{energy}
			E(t) := \frac{1}{2} \sum_{j=1}^{N} \left[\dot{y}_j^2 + y_j^2 - 2\epsilon \left(y_j y_{j+1} - y_j^2\right)\right].
			\end{equation}
			For every $t$ for which the solution $y(t)$ is defined,  we have $E(t) \geq 0$ and the following inequality,
			\begin{equation}\label{normsol}
			\lVert \dot{y}(t) \rVert_{\ell^2(\mathbb{Z}_N)}^2 + \lVert y(t) \rVert_{\ell^2(\mathbb{Z}_N)}^2 \leq 2 E(t).
			\end{equation}
			The rate of change for the energy (\ref{energy}) is found from the evolution problem (\ref{erroreq}) as follows
			\begin{equation}\label{rate}
			\frac{dE}{dt} = -\sum_{j = 1}^{N} \left[ \mathrm{Res}_j(t) + \epsilon \hat{\alpha} \dot{y}_j + \xi (y_j^3 + 3 X_j^2 y_j + 3 X_j y_j^2)\right] \dot{y}_j.
			\end{equation}
			
			Using the Cauchy-Schwarz inequality and setting $E = Q^2$, we get 
			\begin{eqnarray}
			\left|\frac{dQ}{dt}\right| & \leq & \frac{1}{\sqrt{2}}\lVert \mathrm{Res}(t)\rVert_{\ell^2(\mathbb{Z}_N)} + \left[\epsilon\hat{\alpha} + 2|\xi|Q^2 + 3|\xi|\sqrt{2}\lVert X(t)\rVert_{\ell^2(\mathbb{Z}_N)} Q \right. \nonumber\\
			& & \qquad \left.+ 3|\xi|\sqrt{2}\lVert X(t)\rVert^2_{\ell^2(\mathbb{Z}_N)}\right]Q.
			\end{eqnarray}
			
			Take $\tau_0 > 0$ arbitrarily. Assume that the initial norm of the perturbation term satisfies the following bound
			\begin{equation}\label{initialQ}
			Q(0) \leq C_0 \epsilon^{3/2},
			\end{equation}
			where $C_0$ is a positive constant, and define 
			\begin{equation}\label{T0}
			T_0 = \sup \left \{t_0 \in [0,2\tau_0\epsilon^{-1}] : \sup_{t \in [0,t_0]} Q(t) \leq C_Q \epsilon^{3/2} \right \},
			\end{equation}
			on the time scale $[0, 2\tau_0\epsilon^{-1}]$.
			
			Applying Lemmas \ref{ConstantX}-\ref{ConstantR} and the definition (\ref{T0}), we have
			\begin{equation}
			\left|\frac{dQ}{dt}\right| \leq \frac{\epsilon^{5/2}C_R}{\sqrt{2}} + \left(2\hat{\alpha} + 4|\xi| C_Q^2\epsilon^2 + 6|\xi|\sqrt{2}C_Q C_X \epsilon + 6 |\xi|C_X^2 \right) \frac{\epsilon Q}{2} \:. \label{estimateE}
			\end{equation}
			Thus, for every $t \in [0,T_0]$ and $\epsilon > 0$ which is sufficiently small, we can find a positive constant $K_0$, which is independent of $\epsilon$, such that 
			\begin{equation}
			2\hat{\alpha} + 4|\xi| C_Q^2\epsilon^2 + 6|\xi|\sqrt{2}C_Q C_X \epsilon + 6 |\xi|C_X^2 \leq K_0 .
			\end{equation}
			Integrating (\ref{estimateE}), we get
			\begin{equation}
			Q(t) e^{-\frac{\epsilon K_0 t}{2}} - Q(0) \leq \int_{0}^{t} \frac{C_R \epsilon^{5/2}}{\sqrt{2}}  e^{-\frac{\epsilon K_0 s}{2}} ds \leq \frac{\sqrt{2}C_R \epsilon^{3/2}}{K_0}.
			\end{equation}
			Since we assume (\ref{initialQ}) holds, then we obtain
			\begin{equation}
			Q(t) \leq \epsilon^{3/2} \left(C_0 + \frac{\sqrt{2}C_R}{ K_0}\right) e^{K_0 \tau_0}.
			\end{equation}
			Therefore, we can define $C_Q:= \left(C_0 + 2^{1/2} K_0^{-1} C_R\right) e^{K_0 \tau_0}$.  
			
			Based on the above analysis, we can state the main result of this paper in the following theorem.
			\begin{theorem}\label{theorem1}
				{For every $\tau_0 > 0$, there are a small $\epsilon_0 > 0$ and positive constants $C_0$ and $C$ such that for every $\epsilon \in (0,\epsilon_0)$, for which the initial data satisfies 
					\begin{equation}\label{data1}
					\lVert y(0) \rVert_{\ell^2(\mathbb{Z}_N)} + \lVert \dot{y}(0) \rVert_{\ell^2(\mathbb{Z}_N)} \leq C_0 \epsilon^{3/2} ,
					\end{equation}
					the solution of the discrete Klein-Gordon equation (\ref{dKGdriven}) satisfies for every $t \in [0, 2\tau_0\epsilon^{-1}]$,
					\begin{equation}\label{data2}
					\lVert y(t) \rVert_{\ell^2(\mathbb{Z}_N)} + \lVert \dot{y}(t) \rVert_{\ell^2(\mathbb{Z}_N)} \leq C \epsilon^{3/2} .
					\end{equation}}
			\end{theorem}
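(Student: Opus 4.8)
The statement is exactly the endpoint of the energy machinery assembled above, so my plan is to run a continuity (bootstrap) argument in the variable $Q=\sqrt{E}$ and upgrade the conditional estimate valid on $[0,T_0]$ into an unconditional one on the whole scale $[0,2\tau_0\epsilon^{-1}]$. First I would record that (\ref{erroreq}) is, on the finite lattice $\mathbb{Z}_N$, a system of $N$ second-order ODEs with smooth (polynomial) nonlinearity and a smooth, bounded forcing $\mathrm{Res}_j(t)$, the latter being guaranteed by the global smoothness of $A$ from Lemma~\ref{Global}. Hence a unique local solution $y$ exists and may be continued as long as $\lVert y\rVert_{\ell^2(\mathbb{Z}_N)}$ and $\lVert\dot{y}\rVert_{\ell^2(\mathbb{Z}_N)}$ stay finite, i.e.\ as long as $E(t)<\infty$; the energy bound produced below therefore simultaneously yields the estimate and the existence of $y$ on the full interval.

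Next I would connect the hypothesis (\ref{data1}) to the working assumption (\ref{initialQ}). Using the definition (\ref{energy}) of $E$ and estimating the off-diagonal sum $\sum_j (y_j y_{j+1}-y_j^2)$ via periodicity together with Young's inequality, one gets $E(0)\le \tfrac12\bigl(\lVert\dot{y}(0)\rVert_{\ell^2(\mathbb{Z}_N)}^2+(1+4\epsilon)\lVert y(0)\rVert_{\ell^2(\mathbb{Z}_N)}^2\bigr)$, so that for $\epsilon\in(0,1)$ the bound (\ref{data1}) forces $Q(0)=\sqrt{E(0)}\le \mathrm{const}\cdot C_0\,\epsilon^{3/2}$; after relabelling the constant this is precisely (\ref{initialQ}).

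The delicate point is the order in which the constants are fixed, since a naive reading makes $C_Q$ and $K_0$ mutually dependent. I would break the circularity as follows: fix $\tau_0$, choose $K_0$ strictly larger than the $\epsilon$-independent part $2\hat{\alpha}+6|\xi|C_X^2$ of the bracket in (\ref{estimateE}), then \emph{define} $C_Q=(C_0+2^{1/2}K_0^{-1}C_R)e^{K_0\tau_0}$, which is now a fixed number, and only afterwards shrink $\epsilon_0$ so that for all $\epsilon\in(0,\epsilon_0)$ the remaining $\epsilon$-dependent terms $4|\xi|C_Q^2\epsilon^2+6|\xi|\sqrt{2}\,C_Q C_X\epsilon$ are absorbed into $K_0$ and, simultaneously, $Q(0)<C_Q\epsilon^{3/2}$ so that the set defining $T_0$ in (\ref{T0}) is nonempty.

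Finally I would close the bootstrap by contradiction. If $T_0<2\tau_0\epsilon^{-1}$, continuity of $Q$ gives $Q(T_0)=C_Q\epsilon^{3/2}$; but on $[0,T_0]$ the differential inequality (\ref{estimateE}) is valid, and its integration yields $Q(T_0)\le e^{\epsilon K_0 T_0/2}\epsilon^{3/2}(C_0+\sqrt{2}\,C_R/K_0)$. Since $T_0<2\tau_0\epsilon^{-1}$ the amplification factor satisfies $e^{\epsilon K_0 T_0/2}<e^{K_0\tau_0}$, giving the \emph{strict} inequality $Q(T_0)<C_Q\epsilon^{3/2}$, which contradicts $Q(T_0)=C_Q\epsilon^{3/2}$. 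Hence $T_0=2\tau_0\epsilon^{-1}$ and $Q(t)\le C_Q\epsilon^{3/2}$ throughout. Translating back through (\ref{normsol}) and $a+b\le\sqrt{2}\sqrt{a^2+b^2}$ gives $\lVert y(t)\rVert_{\ell^2(\mathbb{Z}_N)}+\lVert\dot{y}(t)\rVert_{\ell^2(\mathbb{Z}_N)}\le 2Q(t)\le 2C_Q\epsilon^{3/2}$, so the theorem holds with $C=2C_Q$. I expect the main obstacle to be exactly this last closing step, namely extracting a strict inequality from the exponential factor and arranging the constants so that the argument is not circular; everything else is the Banach-algebra and energy bookkeeping already prepared in Lemmas~\ref{ConstantX}--\ref{ConstantR}.
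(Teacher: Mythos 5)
Your proposal is correct and follows essentially the same route as the paper: the paper's proof is precisely the pre-theorem machinery you cite --- the energy $E$, the substitution $Q=\sqrt{E}$, the continuation time $T_0$ of \eqref{T0}, the Gronwall-type integration of \eqref{estimateE}, and the definition $C_Q=\left(C_0+2^{1/2}K_0^{-1}C_R\right)e^{K_0\tau_0}$. In fact your write-up is tighter than the paper's at exactly the points you flag: the paper never explicitly breaks the apparent circularity between $K_0$ and $C_Q$, never closes the bootstrap by the strict-inequality contradiction at $T_0$, and never records the elementary conversions between \eqref{data1}--\eqref{data2} and the bounds on $Q$, all of which you supply correctly.
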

			
			\begin{remark}
				The error bound that is of order $\mathcal{O}(\epsilon^{3/2})$ in Theorem \ref{theorem1} is linked to the choice of our rotating wave ansatz \eqref{ansatzerror} that creates a residue of order $\mathcal{O}(\epsilon^{5/2})$. If we include a higher-order correction term in the ansatz \eqref{ansatzerror}, see Chapter 5.3 of \cite{9} for the procedure to do it, we will obtain a smaller residue and in return a smaller error bound. 
			\end{remark}
			
			\section{Numerical Discussions} 
			\label{sec4}
			
			\begin{figure}[tbhp!]
				\centering
				\subfigure[]{\includegraphics[scale=0.4]{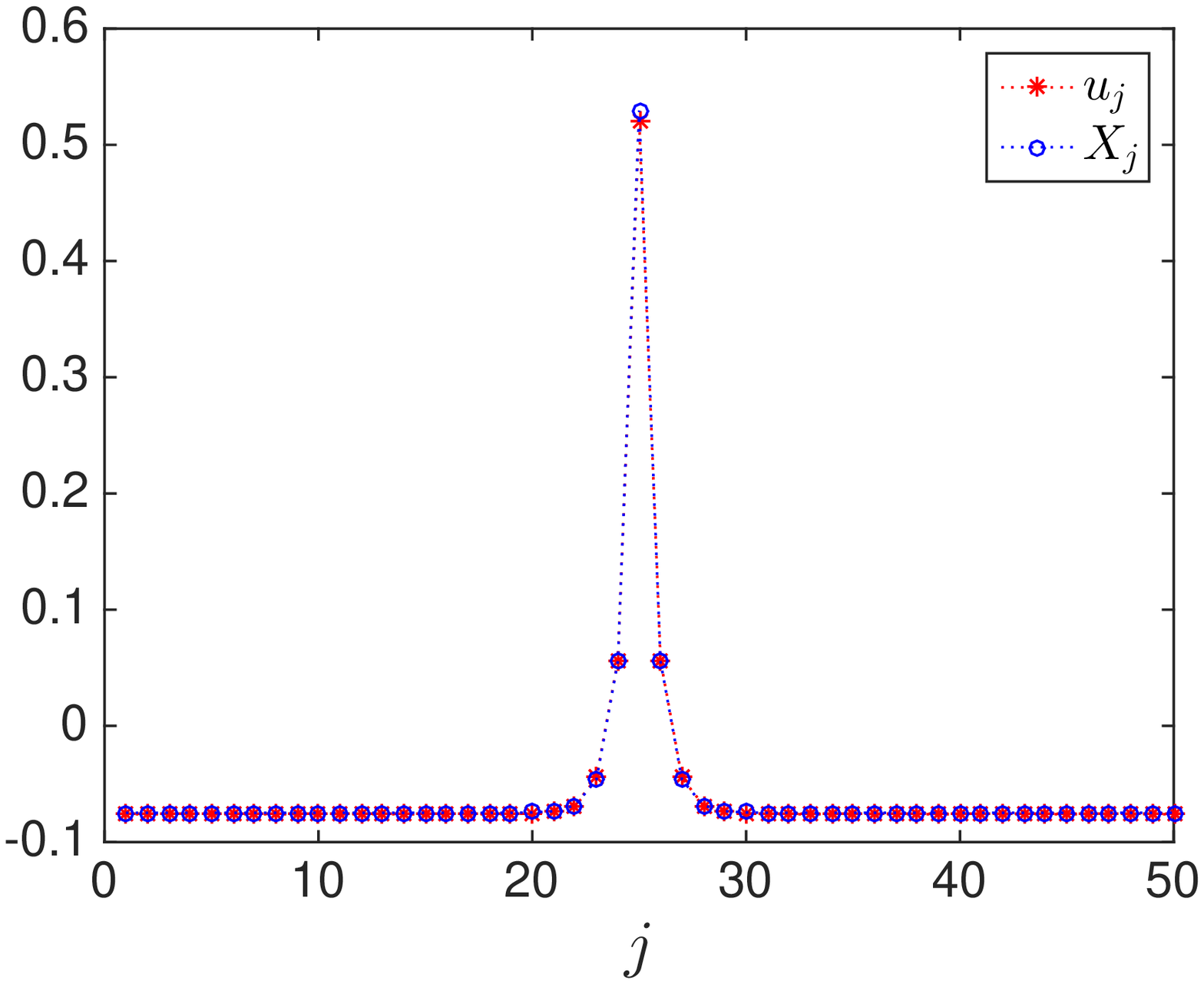}\label{fig1a}}
				\subfigure[]{\includegraphics[scale=0.4]{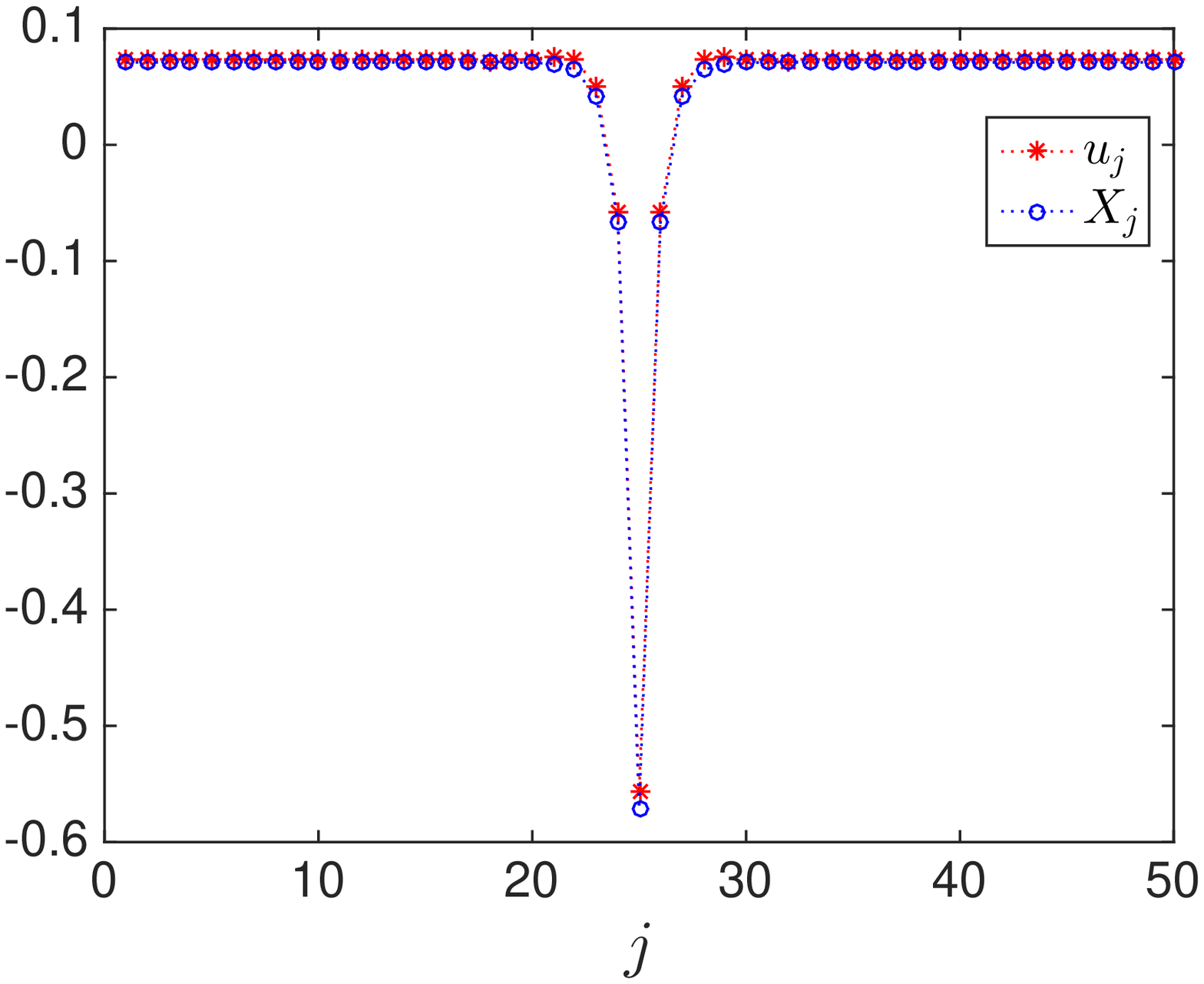}\label{fig1b}}
				\subfigure[]{\includegraphics[scale=0.4]{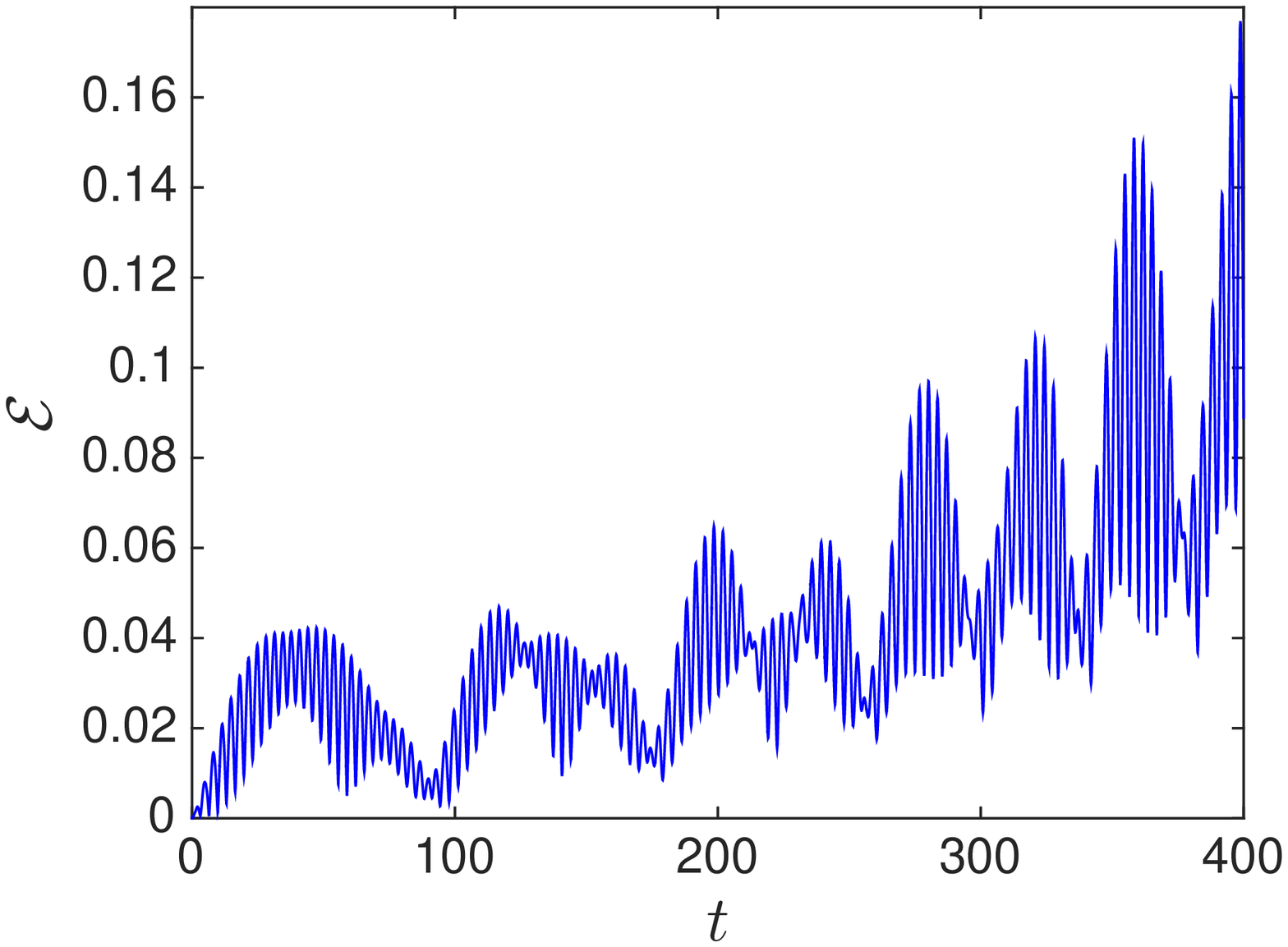}\label{fig1c}}	
				\subfigure[]{\includegraphics[scale=0.4]{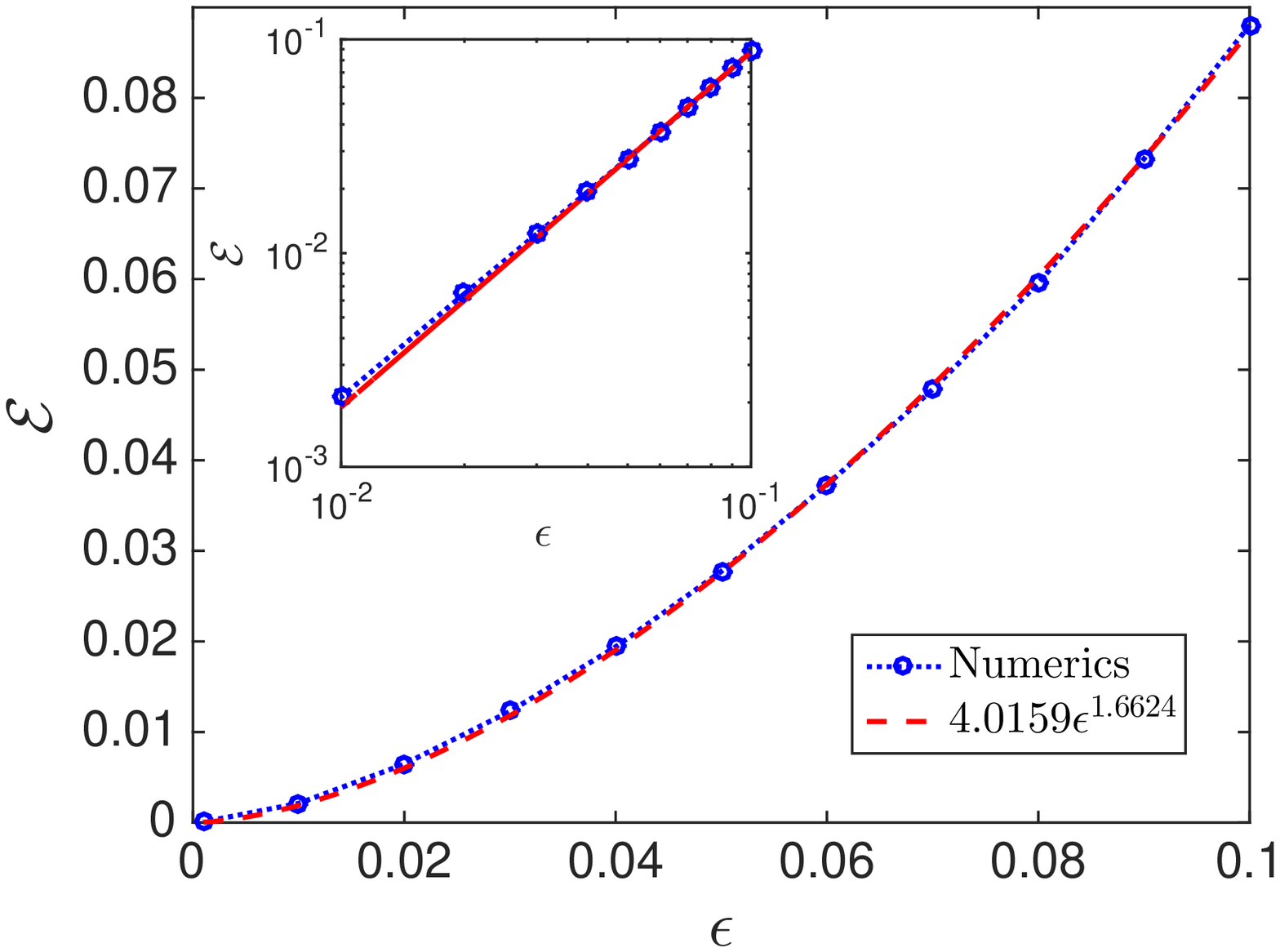}\label{fig1d}}
				\caption{(a,b) Numerical solutions of the Klein-Gordon equation (blue circles) and the corresponding rotating wave approximations from the Schr\"odinger equation (red stars) at two time instances $t=75$ and $t=200$. Here, $\epsilon=0.05$. (c) Time dynamics of the error. (d) Maximum error of the Schr\"odinger approximation within the interval $t\in[0,2/\epsilon]$ for varying $\epsilon\to0$. In the picture, we also plot the best power fit of the error, showing that the error approximately has the same order as in Theorem \ref{theorem1}.
				}
				\label{fig1}
			\end{figure}

			\begin{figure}[htbp]
				\centering
				\subfigure[]{\includegraphics[scale=0.5]{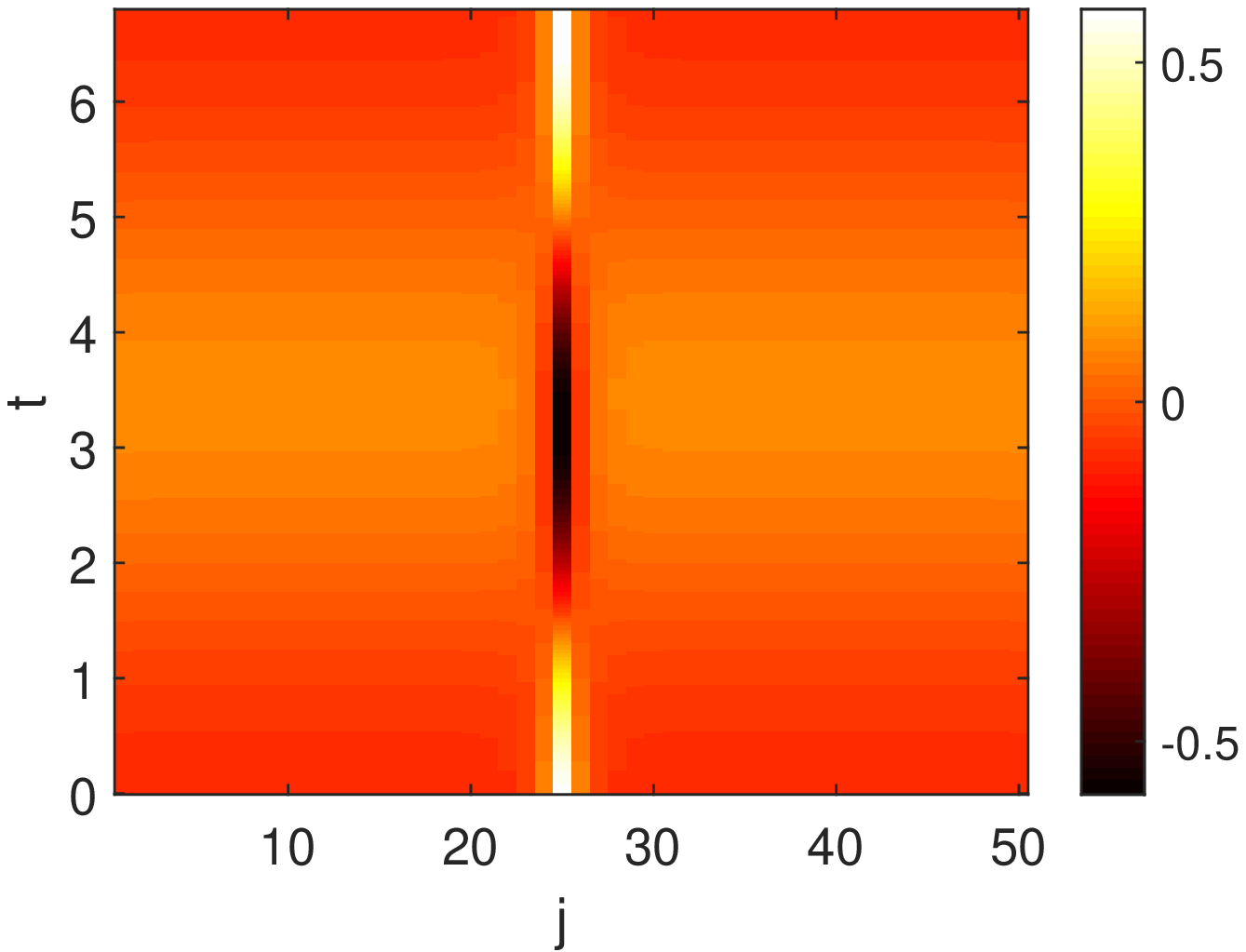}\label{subfig:prof_kg_eps_0_05}}
				\subfigure[]{\includegraphics[scale=0.5]{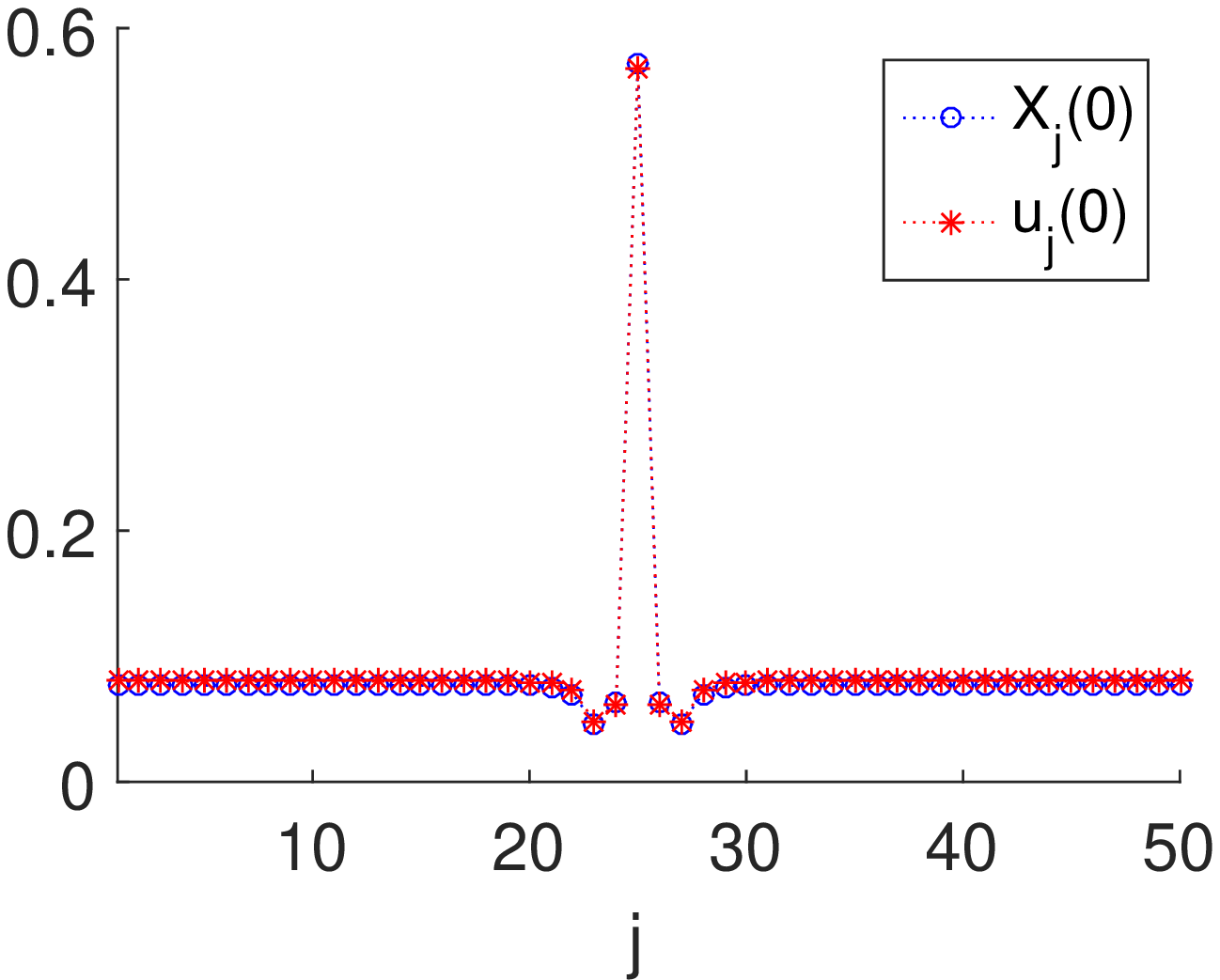}\label{subfig:prof_compare_kg_dnls_eps_0_05}}
				\caption{Breather solution of \eqref{dKGdriven} for ${\epsilon}=0.05$.
					Panel (a) shows the dynamics of the solution in one period, while panel (b) presents the comparison of the breather and its approximation \eqref{ansatzerror}, with $A_j$ obtained from solving the Schr\"odinger equation \eqref{dNLSerror}.
				}
				\label{fig:onsite_stability_compare_kg_dnls}
			\end{figure}
			
			We have discussed in Section \ref{sec2}, that as an approximate solution of the Klein-Gordon equation \eqref{dKGdriven}, the rotating wave approximation \eqref{ansatzerror} and \eqref{dNLSerror} yields a residue of order $\mathcal{O}\left(\epsilon^{5/2}\right)$. Following on the result, we proved in Section \ref{sec3} that the difference between solutions of Eqs.\ \eqref{dKGdriven} and \eqref{dNLSerror} that are initially of at most order $\mathcal{O}\left(\epsilon^{3/2}\right)$ will be of the same order for some finite time $2\tau_0/\epsilon$, for $\tau_0>0$. In this section, we will illustrate the analytical result on the error bound above numerically.  
			
			\subsection{Error growth}
			
			We consider Eq.\ \eqref{dKGdriven} as an initial value problem, that is then integrated using the fourth-order Runge-Kutta method. To compare solutions of Eq.\ \eqref{dKGdriven} and the rotating wave approximation \eqref{ansatzerror}, simultaneously we also need to integrate Eq.\ \eqref{dNLSerror}. As the initial data of the Klein-Gordon equation, we take 
			\begin{eqnarray*}
				u_j(0)&=& \left.{\epsilon^{1/2} } A_j+ \frac{1}{8} \xi \epsilon ^{3/2} 
				A_j^3\right|_{t=0}+c.c.,\\
				\dot{u}_j(0)&=&\left. {\epsilon^{1/2} }\left[ {A_j}_\tau+i \Omega {A_j} \right] + \frac{1}{8} \xi \epsilon ^{3/2} A_j^2\left[
				3{A_j}_\tau+3 i \Omega A_j\right]\right|_{t=0} +c.c.,
			\end{eqnarray*}
			where ${A_j}_\tau(0)$ can be obtained from the Schr\"odinger equation \eqref{dNLSerror}. In this way, the error $y(t)=u_j(t)-X_j(t)$ will satisfy the initial condition $\lVert y(0)\rVert_{\ell^2}=0$. In the following, we take the parameter values $\omega=3$, $\hat{h}=-0.5$, $\hat{\alpha}=0.1$,  and the nonlinearity coefficient $\xi=-1$. We also take the number of sites $N=50$. 
			
			For our illustration, we consider a discrete soliton, i.e., a special standing wave solution of the Schr\"odinger equation \eqref{dNLSerror} that is localised in space. Such a solution can be obtained rather immediately from solving the time-independent equation of \eqref{dNLSerror} using, e.g., Newton's method. 
			
			In Fig.\ \ref{fig1a} and \ref{fig1b} we plot the solutions $u_j(t)$ and $X_j(t)$ for $\epsilon=0.05$ at two different subsequent times. In panel (c) of the same figure, we plot the error $\lVert y(t)\rVert$ between the two solutions, which shows that it increases. However, the increment is bounded within the prediction $\sim C\epsilon^{3/2}$ for quite a long while.
			
			We have performed similar computations for several different values of $\epsilon\to0$. Taking $\tau_0=1$, we record sup$_{t\in[0,2\tau_0/\epsilon]}\lVert y(t)\rVert$ for each $\epsilon$. We plot in Fig.\ \ref{fig1d} the maximum error within the time interval as a function of $\epsilon$. We also plot in the same panel the best power fit in the nonlinear least squares sense
			, showing that the error is approximately of order $\mathcal{O}(\epsilon^{3/2})$ in agreement with Theorem \ref{theorem1}.
			
			\subsection{Discrete solitons vs.\ discrete breathers}
			
			Our simulations in Fig.\ \ref{fig1} indicate that discrete solitons of the Schr\"odinger equation shall approximate breathers, i.e., solutions that are periodic in time but localised in space, of the discrete Klein-Gordon equation. Yet, how close are the actual discrete breathers from the solitons? If they are quite close, do they share the same stability characteristics? 
			
			\begin{figure}[htbp]
				\centering
				{\includegraphics[scale=0.5]{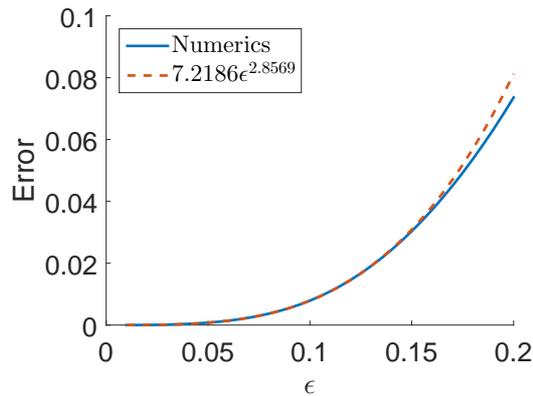}\label{subfig:error_plot}}
				\caption{Plot of the estimated error of the discrete Schr\"odinger approximation (\ref{dNLSerror}) for various $\epsilon\to0$. The dashed line is the best power fit, indicated in the legend. }
				\label{fig:err}
			\end{figure}
			
			To answer the questions, we need to look for breathers of \eqref{dKGdriven}. Due to the temporal periodicity of the solutions, we can write  $u_j(t)$ in trigonometric series:
			\begin{equation}
			u_j(t)=\sum_{k=1}^{K}a_{j,k}\cos\left(\left(k-1\right)\Omega t\right)+b_{j,k}\sin\left(k\Omega t\right),\,j=1,2,\dots,N,
			\label{eq:anz}
			\end{equation}
			where $a_{j,k}$ and $b_{j,k}$ are the Fourier coefficients and $K$ is the number of Fourier modes we will use in our numerics. Herein, we use $K=3$ and $N=50$, even though larger numbers have been used as well to make sure that the results are independent of the lattice size and the number of modes. 
			
			Substituting the series \eqref{eq:anz} into Eq.\ \eqref{dKGdriven} and integrating the resulting equation over the time-period $2\pi/\Omega$, one will obtain coupled nonlinear equations for the coefficients $a_{j,k}$ and $b_{j,k}$. We then use Newton's method to solve the resulting equations. Breathers will be obtained by properly choosing the initial guess for the coefficients.

			Once a solution, e.g., $\hat{u}_j(t)$, is obtained, we determine its linear stability using Floquet theory. Defining $u_j(t) = \hat{u}_j(t)+\delta Y_j(t),$ substituting it into Eq.\ \eqref{dKGdriven}, and linearising about $\delta=0$, we obtain the linear second-order differential-difference equation 
			\begin{equation}
			\begin{array}{ccl}
			\dot{Y}_j&=&Z_j,\\
			\dot{Z}_j&=&-Y_j-3\xi \hat{u}_j^2 Y_j+\epsilon\Delta_2Y_j-\alpha {Z}_j.
			\end{array}
			\label{fm}
			\end{equation}
			By integrating the system of linear equations until $t={2\pi}/{\Omega}$, and using a standard basis in $\mathbb{R}^{2N}$, i.e., $\left\{e^0_1,e^0_2,...,e^0_{2N}\right\}$ as the initial condition at $t=0$, we obtain a collection of solutions at $t={2\pi}/{\Omega}$:
			\begin{equation}
			M=\left\{
			E_1,E_2,...,E_{2N}
			\right\}
			\in \mathbb{R}^{2N\times2N},
			\end{equation}
			as a monodromy matrix. The solution $\hat{u}_j(t)$ is said to be linearly stable when all the eigenvalues $\lambda$ of the monodromy matrix lies inside or on the unit circle and unstable when there exists at least one $\lambda$ that is outside the unit circle. 
			
			As for discrete solitons of the Schr\"odinger equation \eqref{dNLSerror}, after a standing wave solution $\tilde{A}_j=(\tilde{x}_j+i\tilde{y}_j)$ is obtained, its linear stability can also be determined from solving the linear eigenvalue problem
			\begin{equation}
			\lambda\left(
			\begin{array}{c}
			\hat{x}_j\\
			\hat{y}_j
			\end{array}
			\right)=
			\left(
			\begin{array}{cc}
			-6\xi{x}_j{y}_j-\alpha& \Delta-\omega-3\xi\left({x}_j^2+3{y}_j^2\right)\\
			\omega-\Delta+3\xi\left(3{x}_j^2+{y}_j^2\right)&6\xi{x}_j{y}_j-\alpha
			\end{array}
			\right)
			\left(
			\begin{array}{c}
			\hat{x}_j\\
			\hat{y}_j
			\end{array}
			\right),
			\label{em}
			\end{equation}
			that is derived straightforwardly as above from substituting $A_j=\tilde{A}_j+\delta(\hat{x}_j+i\hat{y}_j)e^{\lambda \tau}$ into Eq.\ \eqref{dNLSerror} and linearising the equation about $\delta=0$. Solution $\tilde{A}_j$ is said to be linearly stable when all of the eigenvalues have Re$(\lambda)\leq0$ and unstable when there is an eigenvalue with Re$(\lambda)>0$.
			
			\begin{figure}[tbhp!]
				\centering
				\subfigure[]{\includegraphics[scale=0.5]{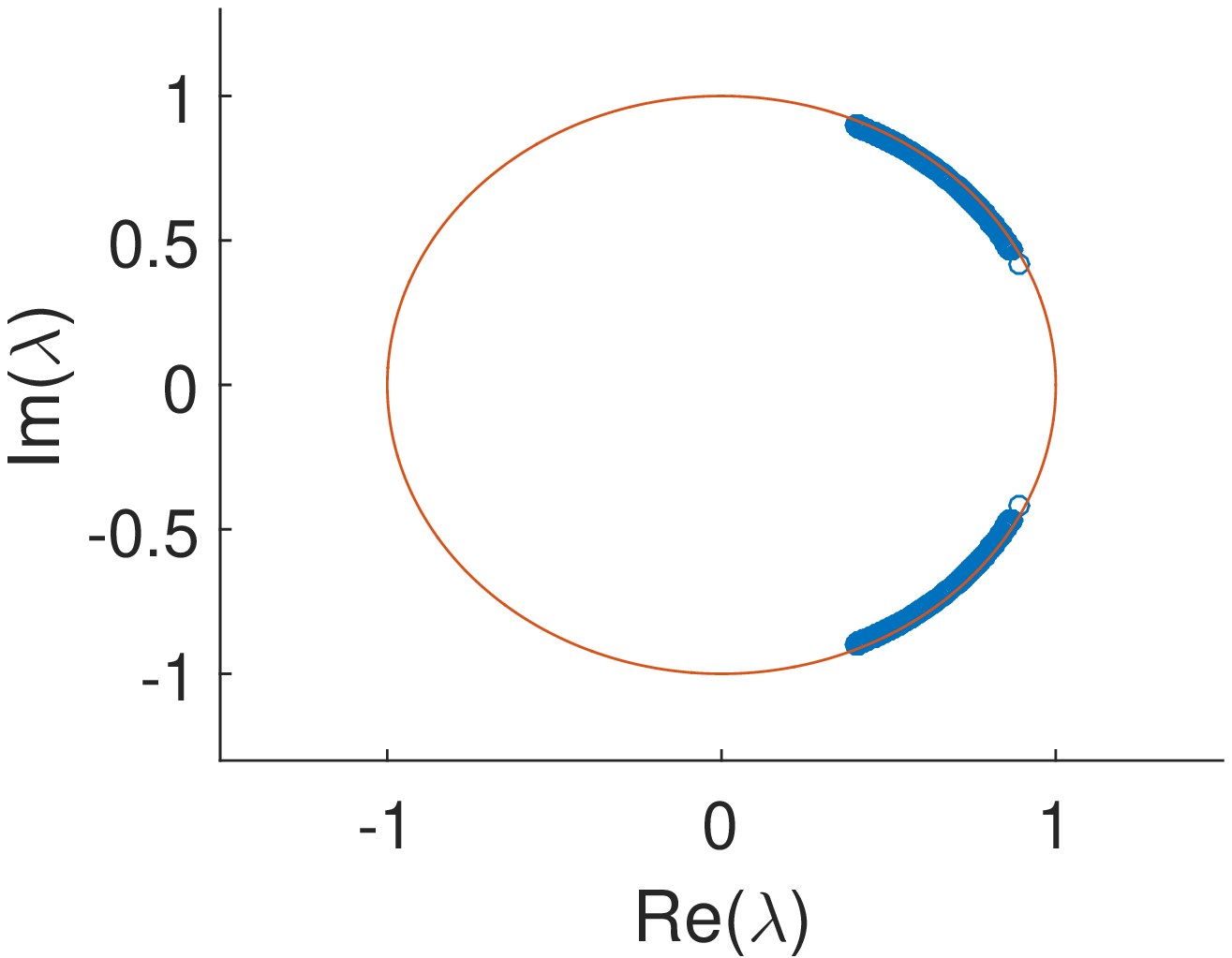}\label{subfig:eig_kg_eps_0_05}}
				\subfigure[]{\includegraphics[scale=0.5]{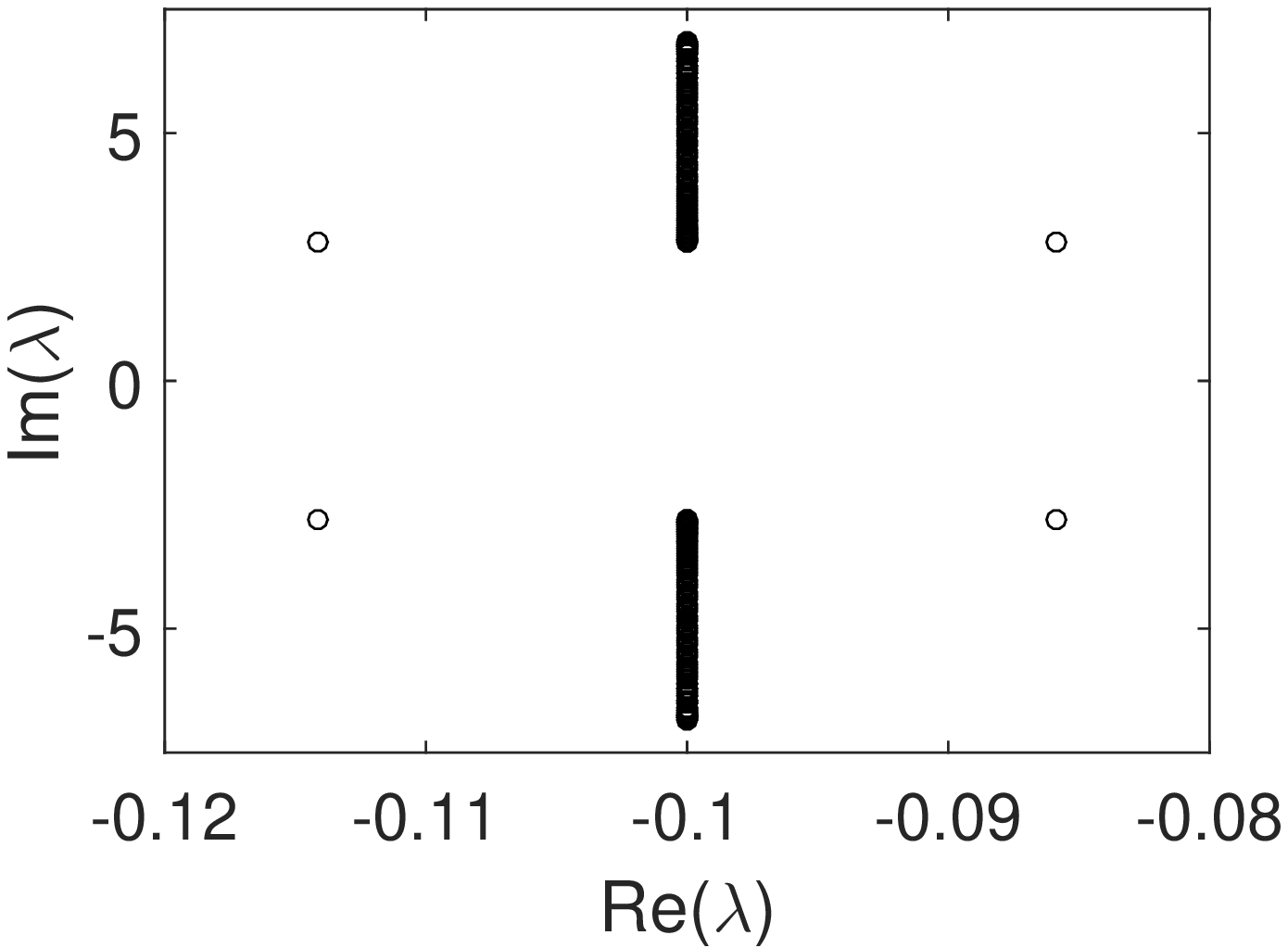}\label{subfig:eig_dnls}}
				\subfigure[]{\includegraphics[scale=0.5]{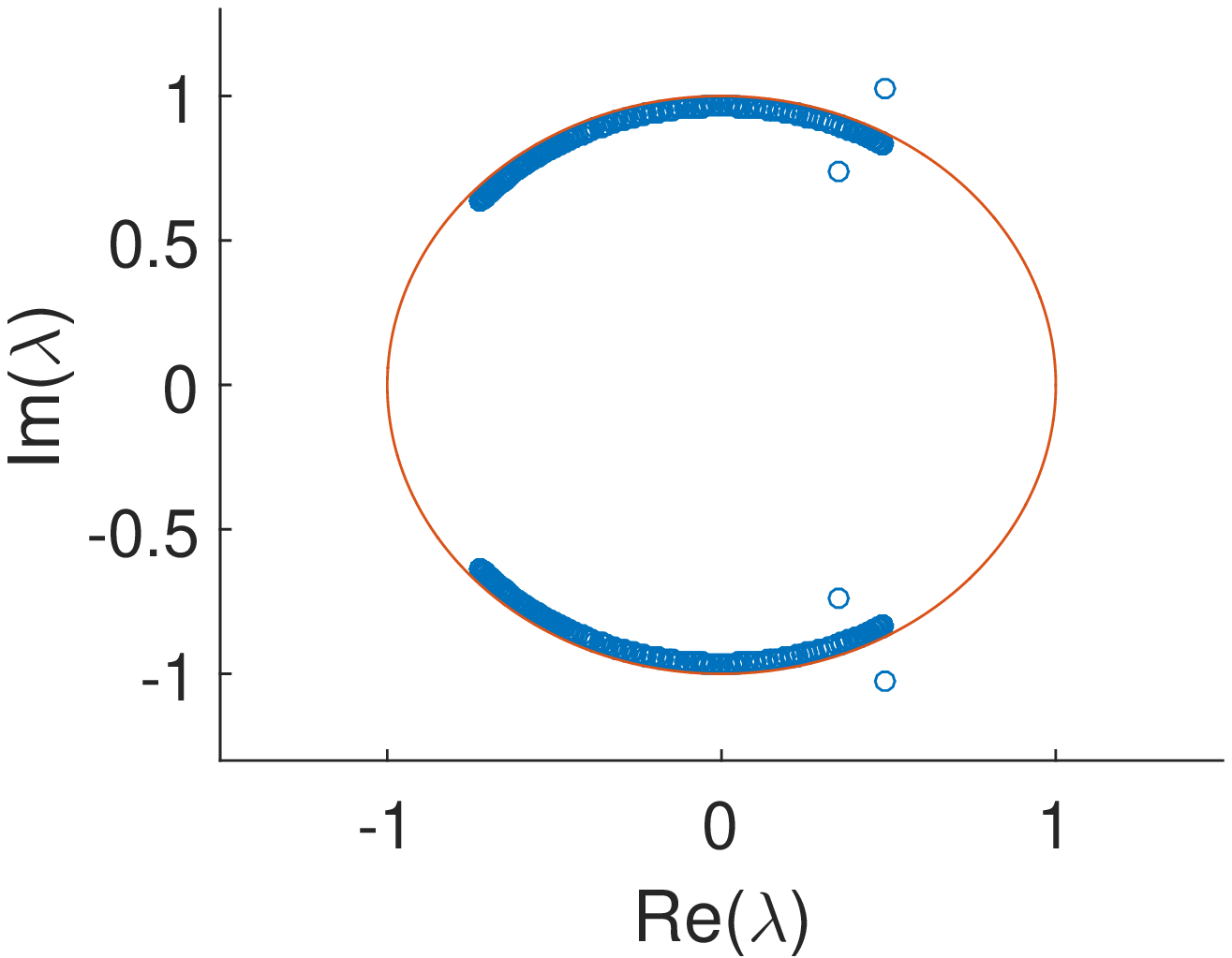}\label{subfig:eig_kg_eps_0_1}}	
				\subfigure[]{\includegraphics[scale=0.5]{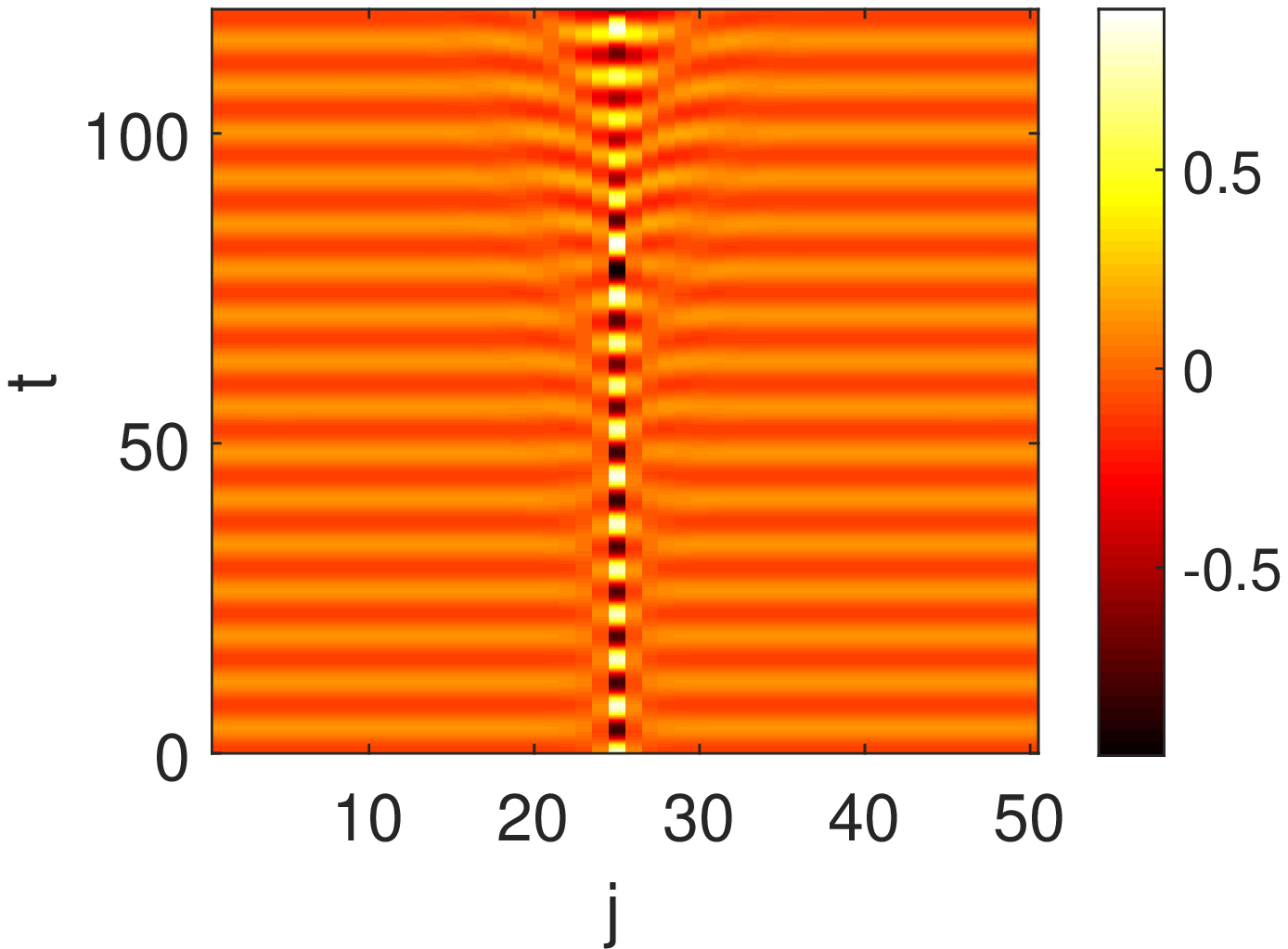}\label{subfig:runge_stability_eps_0_1}}
				\caption{(a) Characteristic multipliers, i.e., eigenvalues of the monodromy matrix, of the breather in Fig.\ \ref{subfig:prof_kg_eps_0_05}, showing the linear stability of the solution. (b) Eigenvalues of the corresponding discrete soliton. Because all of the eigenvalues are on the left half-plane, the solution is linearly stable. (c) The same as panel (a), but for $\epsilon=0.1$, i.e., the breather is linearly unstable. (d) Time dynamics of the unstable breather with multipliers shown in panel (c).
				}
				\label{fig:onsite_stability_test}
			\end{figure}
			
			We present in Fig.\ \ref{subfig:prof_kg_eps_0_05} a breather solution and its time-dynamics in one period for $\epsilon=0.05$. We also compare in Fig.\ \ref{subfig:prof_compare_kg_dnls_eps_0_05} the breather in panel (a) and the approximation \eqref{ansatzerror} where $A_j$ is the discrete soliton solution obtained from solving Eq.\ \eqref{dNLSerror}. One can see the good agreement between them. 
			
			By defining the error between breathers of \eqref{dKGdriven} and the approximation \eqref{ansatzerror} using discrete solitons of \eqref{dNLSerror} as
			\[
			\mathcal{E}=\sup_{0\leq t<2\pi/\Omega}\lVert y(t) \rVert_{\ell^2(\mathbb{Z}_N)},
			\] we plot the error  in Fig.\ \ref{fig:err} for varying $\epsilon$. We also depict in the same picture, the best power fit, which interestingly shows an algebraic power that follows the estimated error in Theorem \ref{theorem1}, i.e., $\sim\epsilon^{3/2}$.
			
			For the sake of completeness, we show in Fig.\ \ref{subfig:eig_kg_eps_0_05} the Floquet multipliers of the solution in Fig.\ \ref{subfig:prof_kg_eps_0_05} for $\epsilon=0.05$, that are obtained from solving the linear equations \eqref{fm}. Because all the eigenvalues are inside the unit circle, the breather is stable. We plot the eigenvalues of the corresponding discrete soliton in Fig.\ \ref{subfig:eig_dnls}
			, also showing stability. Because both solutions are stable, the error between them, that is initially of order $\mathcal{O}(\epsilon^{3/2})$, will stay the same as time evolves until at least $t=2\tau_0/\epsilon,$ for some $\tau_0>0$.
			
			When $\epsilon$ is taken to be larger, we observe that breathers of Eq.\ \eqref{dKGdriven} can become unstable. Shown in Fig.\ \ref{subfig:eig_kg_eps_0_1} are the Floquet multipliers of the breather when $\epsilon=0.1$. Because there is an eigenvalue outside the unit circle, the localised solution is unstable. The unstable eigenvalue bifurcates from the collision of an eigenvalue with the continuous spectrum. Note that the corresponding localised solution of the approximating Schr\"odinger equation \eqref{dNLSerror} is still the same as that shown in Fig.\ \ref{subfig:prof_kg_eps_0_05}, i.e., it is a stable solution. 
			
			We show in Fig.\ \ref{subfig:runge_stability_eps_0_1} the dynamics of the unstable solution. One can observe that it maintains its shape in the form of periodic oscillations for a while, i.e., $t\sim2\tau_0/\epsilon$. After that, the breather starts to deform and break up. Eventually the solution will collapse, i.e.\ unbounded blow-up, which is typical for the Klein-Gordon equation \eqref{dKGdriven}, even when it is undriven \cite{2} (see also a related work \cite{8}). 
			
\section*{Acknowledgements}
				YM thanks MoRA (Ministry of Religious Affairs) Scholarship of the Republic of Indoesia for a financial support. The research of FTA and BEG are supported by  Riset P3MI ITB 2019. RK gratefully acknowledges financial support from Lembaga Pengelolaan Dana Pendidikan (Indonesia Endowment Fund for Education) (Grant No.\ - Ref: S-34/LPDP.3/2017). 
\section*{References}

\end{document}